\newcolumntype{d}[1]{D..{#1}} 
\newtheorem{theorem}{Theorem}
\newenvironment{proof}[1][Proof]{\begin{trivlist}
\item[\hskip \labelsep {\bfseries #1}]}{\end{trivlist}}
\newcommand{\qed}{\nobreak \ifvmode \relax \else
      \ifdim\lastskip<1.5em \hskip-\lastskip
      \hskip1.5em plus0em minus0.5em \fi \nobreak
      \vrule height0.75em width0.5em depth0.25em\fi}
\newcolumntype{d}[1]{D{.}{.}{#1}}
\newcommand{\ben}{\begin{displaymath}}
\newcommand{\een}{\end{displaymath}}
\title{Corrected Forecast Combinations} 
\date{\today }
\author{Chu-An Liu\\
\small{\it Academia Sinica, Taiwan}
\\
and\\
Andrey L. Vasnev\thanks{%
Corresponding author contact information: The University of Sydney Business School, \href{mailto:andrey.vasnev@sydney.edu.au}{\texttt{andrey.vasnev@sydney.edu.au}}.  We would like to thank Jan R. Magnus for stimulating conversations spanning multiple years that led to the fruition of this project, and participants of the [add conferences] for their encouragement and constructive comments. The hospitality of the Institute of Economics, Academia Sinica, where the empirical foundations of this project were laid, is very much appreciated by A.L. Vasnev. The authors are responsible for any remaining errors. } \\
\small{\it The University of Sydney, New South Wales, Australia}}
\begin{document}

\maketitle
\begin{abstract}
This paper proposes corrected forecast combinations when the original combined forecast errors are serially dependent. Motivated by the classic \citet{bates1969combination} example, we show that combined forecast errors can be strongly autocorrelated and that a simple correction—adding a fraction of the previous combined error to the next-period combined forecast—can deliver sizable improvements in forecast accuracy, often exceeding the original gains from combining. We formalize the approach within the conditional-risk framework of \citet{COW2024}, in which the combined error decomposes into a predictable component (measurable at the forecast origin) and an innovation. We then link this correction to efficient estimation of combination weights under time-series dependence via GLS, allowing joint estimation of weights and an error-covariance structure. Using the U.S. Survey of Professional Forecasters for major macroeconomic indices across various subsamples (including pre/post-2000, GFC, and COVID), we find that a parsimonious correction of the mean forecast with a coefficient around 0.5 is a robust starting point and often yields material improvements in forecast accuracy. For optimal-weight forecasts, the correction substantially mitigates the forecast combination puzzle by turning poorly performing out-of-sample optimal-weight combinations into competitive forecasts.
\\

\noindent {\bf JEL Classifications:} C53; C22; E37\newline

\noindent {\bf Keywords:} Forecast combination, Serial dependence, Forecast correction, Survey of Professional Forecasters
\end{abstract}
\thispagestyle{empty}
\newpage

\doublespace

\section{Introduction}


Consider the motivating example from \citet{bates1969combination}, in which the passenger miles flown in 1953 were predicted by two models, exponential smoothing (ES) and Box-Jenkins' ARIMA model (BJ), as well as by their equally weighted (EW) combination, reproduced in Table~\ref{tbl:BG1969}. The final line in the table reports the mean squared forecasting error (MSFE). 
The final column presents our contribution, explained below.

\sisetup{
  table-number-alignment = center,
  table-format = 3.1,        
  detect-all,                
  input-symbols = {−},       
  table-align-text-post = false
}

\begin{table}[htbp]
\centering
\caption{Errors in forecasts (actual less estimated) of passenger miles flown, 1953.}
\label{tbl:BG1969}
\begin{tabular}{
    l
    S[table-format=3.1]
    S[table-format=3.1]
    S[table-format=3.1]
    S[table-format=3.1]
}
\hline
Month &
\multicolumn{1}{c}{\makecell{Brown's exponential\\ smoothing forecast\\ errors}} &
\multicolumn{1}{c}{\makecell{{Box--Jenkins}\\ {adaptive}\\ {forecasting}\\ {errors}}} &
\multicolumn{1}{c}{\makecell{{Combined forecast}\\ ($\tfrac{1}{2}$ Brown +\\ $\tfrac{1}{2}$ Box--Jenkins)\\ {errors}}} &
\multicolumn{1}{c}{\makecell{Corrected\\ combined forecast \\ errors}} \\
\hline
Jan.   & 1   & -3   & -1 & -2.375 \\
Feb.  & 6   & -10  & -2 & -1.5 \\
March & 18  & 24     & 21 & 22 \\
April & 18  & 22     & 20 & 9.5 \\
May   & 3   & -9   & -3 & -13 \\
June  & -17 & -22 & -19.5 & -18 \\
July  & -24 & 10    & -7 & 2.75 \\
Aug.  & -16 & 2     & -7 & -3.5 \\
Sept. & -12 & -11 & -11.5 & -8 \\
Oct.  & -9  & -10 & -9.5 & -3.75 \\
Nov.  & -12 & -12 & -12 & -7.25 \\
Dec.  & -13 & -7  & -10 & -4 \\
\hline
MSFE & 196 & 188 & 150 & 103\\
\hline
\end{tabular}

\begin{tablenotes}
\small
\item {\it Note.} The first four columns are a replication of Table 1 from \citet{bates1969combination} except that the title of the last line is changed from `Variance of errors' to MSFE. Corrected combined forecast errors in the final column are equal to the combined forecast errors$- 0.5\times$combined forecast errors from the previous period, e.g., for Feb. $-1.5=-2-0.5\times (-1)$.
To correct the Jan. observation, we used the December 1952 error, which is 2.75, from \citet{Barnard1963}.
\end{tablenotes}
\end{table}

This motivating example, showing that the average of two forecasts is better than the individual forecasts, started the area of forecast combination, with a multitude of papers appearing every year; see \citet{WANG20231518} for the latest over 50-year review. One simple fact, however, remained overlooked over all these years: the errors of the combined forecast exhibit strong autocorrelation of around 54\%. 
This information can be used in different ways, but the simplest option is to correct the combined forecast with the error from the previous period, i.e., if we denote the actual observation $y_t$ at time $t$, the combined forecast $f^{\text{EW}}_{t | t-1}$ based on the information available at time $t-1$, and its error $e^{\text{EW}}_{t | t-1}=y_{t}-f^{\text{EW}}_{t | t-1}$ observable at time $t$, then the combined forecast for the next period $t+1$ can be corrected using
$$
   f^{\text{CEW}}_{t+1 | t} = f^{\text{EW}}_{t+1 | t} + 0.5\, e^{\text{EW}}_{t | t-1}
$$
(where we use the correction of 0.5 for simplicity)
and its error is
$$
  e^{\text{CEW}}_{t+1 | t} = y_{t+1} - f^{\text{CEW}}_{t+1 | t} = e^{\text{EW}}_{t+1 |t} - 0.5\, e^{\text{EW}}_{t | t-1},
$$
which is reported in the final column in Table~\ref{tbl:BG1969}. 
What is surprising is that correcting for first-order autocorrelation yields even greater improvement than the original combination. Specifically, the change in MSFE from the best individual forecast to the combination (from 188 to 150) is a 20\% reduction, while the change in MSFE from the combination to the corrected combination (from 150 to 103) is a 31\% reduction. 

\begin{figure}[htbp]
\centering
\includegraphics[width=0.7\linewidth]{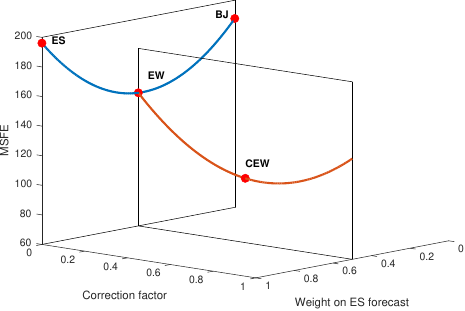}
\caption{Improvements from combination (EW) and correction (CEW) in the motivating example from \citet{bates1969combination}. Points ES and BJ correspond to the original Brown's exponential smoothing and Box-Jenkins forecasts.}
\label{fig:BG}
\end{figure}

Figure~\ref{fig:BG} depicts this improvement. The back plane shows the performance of two original forecasts, ES and BJ, and the equally weighted combination, EW. The quadratic curve in this plane passing through all three points has been studied in \citet{claeskens2016forecast} and holds only when a fixed weight (such as 0.5) is used. Once we correct EW, a new dimension emerges: the plane corresponding to changes in the correction factor. It also contains a new curve that starts at EW and illustrates the improvements achieved by using a fixed correction factor, such as 0.5, which yields point CEW on that curve. The improvement in MSFE from the correction, i.e., from EW to CEW, is visually larger than the improvement from the combination, i.e., from ES or BJ to EW, in line with the numerical results in Table~\ref{tbl:BG1969}.

This motivating example shows that, although the vast forecast combination literature operates in the back plane and has made substantial progress over the years, see \citet{WANG20231518}, a new dimension focusing on the properties of the combined forecast can be highly beneficial. 

Furthermore, the optimal forecasting weights introduced by \citet{bates1969combination} exhibit poor empirical performance, leading to the forecast combination puzzle of \citet{SW2004}: ``repeated finding that simple combination forecasts outperform sophisticated adaptive combination methods in empirical applications''. The optimal weight problem can be estimated using ordinary least squares (OLS) in a linear regression, as noted by \citet{granger1984improved}. However, as many textbooks show, autocorrelated errors in regression models make OLS estimators inefficient, potentially leading to undesirable consequences. Generalized least squares (GLS) estimation is more efficient than OLS in the presence of autocorrelated errors, a point recently reiterated by \citet{POOJARI2025100109}. By focusing on the properties of the combined optimal forecast, we should be able to shed more light on the forecast combination puzzle.

The paper is organized as follows. Section~\ref{sec:theory} formalizes our idea with the correction of the mean (\ref{sec:mean}) and the optimal forecast (\ref{sec:optimalF}) via the two-step procedure used in the motivating example, before calling on the conditional framework of \citet{COW2024} in Subsection~\ref{sec:GLS} that allows simultaneous estimation of the optimal weights and the correction factor. We also provide a generalization of our approach via GLS in Subsection~\ref{sec:generalization}.
The empirical illustration using the U.S. Survey of Professional Forecasters is presented in Section~\ref{sec:empirical}. Finally, Section~\ref{sec:conclusion} concludes.

\section{Theory} \label{sec:theory}

Assume that we wish to forecast $y_{T+h}\in\mathbb{R}$ using the vector of $h$-step-ahead forecasts
$\vf_{T+h | T}=(f_{1,T+h | T},f_{2,T+h | T},\dots,f_{n,T+h | T})' \in \mathbb{R}^n$. 
Following \citet{GiacominiWhite:2006}, we assume that $y_t$ may follow a complex process marked by measurement issues, structural changes, and nonstationarity induced by distribution changes.
Following \citet{aiolfi2006persistence}, we map all forecasts to the real number line and limit our analysis to linear combinations with weights $
\vw=(w_1,w_2,\dots,w_n)' \in \mathbb{R}^n$ 
to produce the combined forecast  $f_{c,T+h | T}=\vw'\vf_{T+h | T}$.
We denote the vector of forecasting errors as $\ve_{T+h | T}=y_{T+h}\viota - \vf_{T+h | T}, $
where $\viota$ is a vector of ones, and the error of the combined forecast as $e_{c,T+h | T}=y_{T+h}-f_{c,T+h | T}=\vw'\ve_{T+h | T}.$

We first formalize the correction to the mean forecast used in the motivating example. We then embed it in the conditional framework of \citet{COW2024} and conclude with the generalization.

\subsection{Correction of the mean forecast} \label{sec:mean}

Let us consider the equal weights $\vw^{\text{EW}} = \viota/n$ (where $\viota$ is a $n\times 1$ vector of ones), a generalization of the motivating example with two forecasts in the introduction. The combined forecast $f_{T+h | T}^{\text{EW}}=\viota'\vf_{T+h | T}/n$ and the corresponding forecasting error $e^{\text{EW}}_{T+h | T} = y_{t+h} - f_{T+h | T}^{\text{EW}}$ can be corrected using a correction factor $b_T^{\text{EW}}$ available at time $T$ to produce the corrected combined forecast $f_{T+h | T}^{\text{CEW}} = f_{T+h | T}^{\text{EW}} + b_{T}^{\text{EW}}$. In our motivating example, $b_{T}^{\text{EW}} = 0.5 \, e^{\text{EW}}_{T | T-1}$ with $h=1$, a simple function of the previous forecasting error. However, a general correction $b_{T}^{\text{EW}}(I_T,\vgamma) $ that involves information $I_T$, including additional variables, and parameters $\vgamma$ is also possible and presented in Section~\ref{sec:generalization}. 
Here, we show a simple possibility of estimating one parameter. 

When dealing with the simple correction $b_{T}^{\text{EW}} = \gamma \, e^{\text{EW}}_{T | T-1}$, it seems that the correction factor of 0.5 is a reliable starting point in many applications, but with sufficient data available, one can use historical data to find the optimal correction. In each period $t$, one needs to solve the optimization
\begin{equation}
   \widehat{\gamma}_{t} = \arg \min_\gamma \sum_{\tau=t_0}^{t-h}\left[y_{\tau+h | \tau} - (f^{\text{EW}}_{\tau+h | \tau} + \gamma\, e^{\text{EW}}_{\tau | \tau-h})\right]^2 \label{eq:optCFh}
\end{equation}
and use the historically optimal (in the period $[t_0,t]$) factor $\widehat{\gamma}_{t}$  to correct forecast  $f^{\text{CEW}}_{t+h | t} = f^{\text{EW}}_{t+h | t} +\widehat{\gamma}_{t}\, e^{\text{EW}}_{t | t-h}$ at the future moment $t+h$.

The optimal correction, in terms of the mean squared error (MSE), is achieved by $b_{T}^{\text{EW}} = \E(e_{T+h | T}^{\text{EW}} | I_T)$ as given by the following theorem.

\begin{theorem}
   Given the additional available information $I_T$, the corrected forecast $f_{T+h | T}^{\text{CEW}}$ using $b_{T}^{\text{EW}} = \E(e_{T+h | T}^{\text{EW}} | I_T)$ achieves lower conditional and unconditional MSE, i.e., $$\E[(e^{\text{EW}}_{T+h | T})^2 | I_T] \ge \E[(e_{T+h | T}^{\text{CEW}})^2 | I_T]  \text{ and }
   \E[(e^{\text{EW}}_{T+h | T})^2] \ge \E[(e_{T+h | T}^{\text{CEW}})^2]. $$
\end{theorem}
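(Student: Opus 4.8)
The plan is to recognize the claim as the defining optimality property of the conditional expectation in $L^2$, and then to read off the two inequalities from one short computation. Throughout, write $e \equiv e^{\text{EW}}_{T+h | T}$ for brevity and $b \equiv b_{T}^{\text{EW}} = \E(e \mid I_T)$, which is by construction measurable with respect to $I_T$; then the corrected error is $e^{\text{CEW}}_{T+h | T} = y_{T+h} - f^{\text{CEW}}_{T+h | T} = e - b$.

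First I would establish the conditional statement. Expanding the square and pulling the $I_T$-measurable quantity $b$ out of the conditional expectation gives
$$\E[(e-b)^2 \mid I_T] = \E[e^2 \mid I_T] - 2\,b\,\E[e \mid I_T] + b^2 = \E[e^2 \mid I_T] - b^2,$$
because $\E[e \mid I_T] = b$. Rearranging,
$$\E[(e^{\text{EW}}_{T+h | T})^2 \mid I_T] - \E[(e^{\text{CEW}}_{T+h | T})^2 \mid I_T] = b^2 = \big(\E[e^{\text{EW}}_{T+h | T} \mid I_T]\big)^2 \ge 0,$$
which is the first inequality. The unconditional inequality then follows immediately by taking expectations of both sides and applying the law of iterated expectations: the left-hand side collapses to $\E[(e^{\text{EW}}_{T+h | T})^2] - \E[(e^{\text{CEW}}_{T+h | T})^2]$, while the right-hand side becomes $\E\big[\big(\E[e^{\text{EW}}_{T+h | T} \mid I_T]\big)^2\big]$, a nonnegative quantity.

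If desired, I would also record the sharper remark that explains \emph{why} $b = \E(e \mid I_T)$ is the right choice: for any competing correction $\tilde b$ measurable with respect to $I_T$, decompose $e - \tilde b = (e - b) + (b - \tilde b)$ and observe that the cross term vanishes conditionally, $\E[(e-b)(b-\tilde b) \mid I_T] = (b - \tilde b)\,\E[e - b \mid I_T] = 0$; hence $\E[(e - \tilde b)^2 \mid I_T] = \E[(e-b)^2 \mid I_T] + (b - \tilde b)^2$, minimized at $\tilde b = b$, with the special case $\tilde b = 0$ recovering the theorem. There is no real obstacle in the argument; the only point requiring care is the implicit regularity condition that the second moments in play (equivalently, $\E[(e^{\text{EW}}_{T+h | T})^2 \mid I_T]$, and therefore $\E[e^{\text{EW}}_{T+h | T} \mid I_T]$) are finite, which should be assumed alongside the statement so that all the conditional expectations above are well defined.
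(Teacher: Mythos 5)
Your proof is correct and follows essentially the same route as the paper: both arguments reduce to the orthogonal decomposition $e^{\text{EW}}_{T+h|T} = b^{\text{EW}}_T + \xi^{\text{EW}}_{T+h}$ with the cross term vanishing because $b^{\text{EW}}_T$ is $I_T$-measurable, yielding the gap $\big(b^{\text{EW}}_T\big)^2 \ge 0$ conditionally and then unconditionally by iterated expectations. Your closing remark on optimality over all $I_T$-measurable corrections and the finite-second-moment caveat are sensible additions but not needed for the statement as given.
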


\begin{proof} 
Since 
$
   y_{T+1} =  f_{T+h | T}^{\text{EW}} + e_{T+h | T}^{\text{EW}} = f_{T+h | T}^{\text{EW}} + b_{T}^{\text{EW}} +\xi_{T+h}^{\text{EW}},
$
where $\xi_{T+h}^{\text{EW}} = e_{T+h | T}^{\text{EW}} - \E(e_{T+h | T}^{\text{EW}} | I_T)$, then the corrected combined forecast error 
$
   e_{T+h | T}^{\text{CEW}} = y_{T+1} - f_{T+h | T}^{\text{CEW}} = e_{T+h | T}^{\text{EW}} - b_{T}^{\text{EW}} = \xi_{T+h}^{\text{EW}}
$
has zero conditional mean and achieves lower conditional MSE as
\begin{align*}
   \E[(e^{\text{EW}}_{T+h | T})^2 | I_T] = &\E[ (b_{T}^{\text{EW}} + \xi_{T+h}^{\text{EW}})^2 | I_T] = (b_{T}^{\text{EW}})^2 + \E[ (\xi_{T+h}^{\text{EW}})^2 | I_T] \nonumber\\
   \ge &\E[ (\xi_{T+h}^{\text{EW}})^2 | I_T] = \E[(e_{T+h | T}^{\text{CEW}})^2 | I_T] \label{eq:cond_inequality}
\end{align*}
and by taking the unconditional expectation, we have the same inequality for the unconditional MSE,
$
   \E[(e^{\text{EW}}_{T+h | T})^2 ]
   \ge  \E[(e_{T+h | T}^{\text{CEW}})^2 ]. 
$
\qed
\end{proof}

In applications, one needs to specify how to model $b_{T}^{\text{EW}} = \E(e_{T+h | T}^{\text{EW}} | I_T)$ and estimate it, which will affect the MSE. Therefore, keeping the model simple and minimizing additional estimation, as in our motivating example, should be a favorable correction strategy.

\subsection{Correction of the (unconditionally) optimal forecast} \label{sec:optimalF}


We now turn our attention to the optimal weights. Assuming that the loss function $L(\cdot)$ depends only on  $e_{c,T+h | T}$, the classical \emph{unconditionally} optimal weights solve the problem
\begin{equation}
    \min_{\vw} \E[L(e_{c,T+h | T})]. \label{eq:opt_problem_uncond}
\end{equation}
Under mean squared error loss, $L(e)=e^2$, only the first two conditional moments influence the optimal weights, and the optimization problem can be solved explicitly. We also assume that the forecasts are unbiased, $\E(\ve_{T+h | T})=\vzeros$; thus, we solve the optimization problem~(\ref{eq:opt_problem_uncond}) subject to the restriction that the weights sum up to one, $\vw'\viota=1$, which produces the optimal weights of \citet{bates1969combination}
\begin{equation}
   \vw^{\text{BG}} = \frac{\Sigma^{-1}\viota}{\viota'\Sigma^{-1}\viota}, \label{eq:BGweights}
\end{equation}
where $\Sigma = \E[\ve_{T+h | T}\ve'_{T+h | T}]$, and the optimal BG forecast $f_{T+h | T}^{\text{BG}}=(\vw^{\text{BG}})'\vf_{T+h | T}$.
The existence of the first two moments of the errors, $\ve_{T+h | T}$, is sufficient for the results derived in this section to be valid under our general data assumptions.

If a suitable correction $b_T^{\text{BG}}$ is available, then the conditional and unconditional inequalities for the reduction in error will hold for the corrected BG forecast $f_{T+h | T}^{\text{CBG}} = f_{T+h | T}^{\text{BG}} + b_T^{\text{BG}} = (\vw^{\text{BG}})'\vf_{T+h | T} + b_T^{\text{BG}}$, as stated in the following theorem.

\begin{theorem}
Given the additional available information $I_T$, the corrected forecast $f_{T+h | T}^{\text{CBG}}$ using $b_{T}^{\text{BG}} = \E(e_{T+h | T}^{\text{BG}} | I_T)$ achieves lower conditional and unconditional MSE, i.e.,
     $$\E[(e^{\text{BG}}_{T+h | T})^2 | I_T] \ge \E[(e_{T+h | T}^{\text{CBG}})^2 | I_T]  \text{ and }
   \E[(e^{\text{BG}}_{T+h | T})^2] \ge \E[(e_{T+h | T}^{\text{CBG}})^2]. $$
\end{theorem}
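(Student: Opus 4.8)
The plan is to follow the proof of the previous theorem essentially verbatim, since that argument never exploited the specific form of the equal weights $\vw^{\text{EW}}=\viota/n$: it used only that $f_{T+h|T}^{\text{BG}}=(\vw^{\text{BG}})'\vf_{T+h|T}$ is a single real-valued forecast with error $e_{T+h|T}^{\text{BG}}=(\vw^{\text{BG}})'\ve_{T+h|T}$ admitting a finite second moment. That premise holds here because $\vw^{\text{BG}}$ in~(\ref{eq:BGweights}) is a fixed, non-random vector built from the unconditional covariance $\Sigma$; combined with the assumed existence of the first two moments of $\ve_{T+h|T}$, this guarantees $\E[(e_{T+h|T}^{\text{BG}})^2]<\infty$ and that every conditional expectation below is well defined.

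First I would decompose the uncorrected error into its $I_T$-predictable part and an innovation,
\[
 e_{T+h|T}^{\text{BG}} = \E(e_{T+h|T}^{\text{BG}}\mid I_T) + \xi_{T+h}^{\text{BG}} = b_T^{\text{BG}} + \xi_{T+h}^{\text{BG}},
\]
where $\xi_{T+h}^{\text{BG}} := e_{T+h|T}^{\text{BG}} - \E(e_{T+h|T}^{\text{BG}}\mid I_T)$ has zero conditional mean by construction and $b_T^{\text{BG}}$ is $I_T$-measurable. Substituting this into $f_{T+h|T}^{\text{CBG}} = f_{T+h|T}^{\text{BG}} + b_T^{\text{BG}}$ gives the corrected error $e_{T+h|T}^{\text{CBG}} = y_{T+h} - f_{T+h|T}^{\text{CBG}} = e_{T+h|T}^{\text{BG}} - b_T^{\text{BG}} = \xi_{T+h}^{\text{BG}}$, which therefore also has zero conditional mean.

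Next I would expand the conditional MSE of the uncorrected forecast,
\[
 \E[(e_{T+h|T}^{\text{BG}})^2\mid I_T] = (b_T^{\text{BG}})^2 + 2\,b_T^{\text{BG}}\,\E(\xi_{T+h}^{\text{BG}}\mid I_T) + \E[(\xi_{T+h}^{\text{BG}})^2\mid I_T] = (b_T^{\text{BG}})^2 + \E[(\xi_{T+h}^{\text{BG}})^2\mid I_T],
\]
where the cross term drops out because $b_T^{\text{BG}}$ is $I_T$-measurable and $\xi_{T+h}^{\text{BG}}$ has zero conditional mean. Since $(b_T^{\text{BG}})^2 \ge 0$, the right-hand side is at least $\E[(\xi_{T+h}^{\text{BG}})^2\mid I_T] = \E[(e_{T+h|T}^{\text{CBG}})^2\mid I_T]$, which is the conditional inequality; taking unconditional expectations and applying the law of iterated expectations then yields $\E[(e_{T+h|T}^{\text{BG}})^2] \ge \E[(e_{T+h|T}^{\text{CBG}})^2]$.

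There is no real obstacle here — the computation is routine once the decomposition is written down. The only point I would take care to state explicitly (and the closest thing to a subtlety) is that the construction of $\vw^{\text{BG}}$ is self-contained: because it depends only on the unconditional moment $\Sigma$, it is not a function of $I_T$, so $f_{T+h|T}^{\text{BG}}$ genuinely behaves as a fixed single forecast and the correction logic transfers without change; the assumed existence of the first two moments of $\ve_{T+h|T}$ is precisely what keeps all the expectations above finite.
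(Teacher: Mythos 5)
Your proof is correct and is exactly the argument the paper intends: its own proof of this theorem simply says ``the same as before,'' referring to the conditional-mean/innovation decomposition of Theorem 1, which you reproduce faithfully (with a welcome extra remark on why $\vw^{\text{BG}}$ being fixed makes the transfer legitimate).
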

\begin{proof}
   The same as before. \qed 
\end{proof}

In fact, any other combination (not necessarily optimal) or even the original individual forecasts can be corrected in a similar manner to improve performance. All previous corrections were two-step procedures, in which the original forecasts were combined in the first step and corrected in the second step. 

The correction term, such as $b_{T}^{\text{EW}}$ or $b_{T}^{\text{BG}}$, will generally depend on a vector of unknown parameters $\vgamma\in\mathbb{R}^k$ 
that need to be estimated or selected. From now on, we explicitly denote this dependence as $b_{T}^{\text{EW}}(\vgamma)$ or $b_{T}^{\text{BG}}(\vgamma)$. In our motivating example in the introduction, this is a scalar first-order autoregressive parameter, and its value is simply set to 0.5. However, if $\vgamma$ is estimated, then the two-step procedure is the simplest option, i.e.,
given the weights, $\vw^{\text{EW}}$, $\vw^{\text{BG}}$ (or any other chosen weights) and their corresponding combinations $f_{T+h | T}^{\text{EW}}$, $f_{T+h | T}^{\text{BG}}$, we can use the corresponding combined forecasting errors to estimate $\vgamma$ and correct the combined forecast $f_{T+h | T}^{\text{CZZ}} = f_{T+h | T}^{\text{ZZ}} + b_{T}^{\text{ZZ}} (\widehat{\vgamma}^{\text{ZZ}} )$, where ZZ stands for EW, BG or any other combination that one decides to use. This procedure may be effective in many situations due to its simplicity.

\subsection{Conditional Framework of \citet{COW2024}} \label{sec:GLS}

One expects benefits from jointly estimating $\vw$ and $\vgamma$, at least in theory, so
we now turn our attention to a one-step procedure in which weights and corrections are determined simultaneously.
%
Given $I_T$, the
\emph{conditionally} optimal combination weights of  \citet{COW2024} solve the problem
\begin{equation}
   \min_{\vw} \E[L(e_{c,T+h | T})|I_T]. \label{eq:opt_problem}
\end{equation}
For the conditional expectation in the optimization (\ref{eq:opt_problem}) to be well defined, one must specify how the conditional information is used. 
While \citet{COW2024} used $I_T$, the information set available at time $T$, to extract the conditional bias of individual models producing the conditionally optimal weights, we will focus on the combined error.

We assume that the combined forecast error, $e_{c,T+h | T}$, can be decomposed into two parts, $e_{c,T+h | T} = b_{T} +\xi_{T+h}$, where $b_{T}=\E(e_{c,T+h | T}|I_T)$ and $\E(\xi_{T+h}|I_T)=0$. A simple example where this decomposition arises naturally is when forecasting errors exhibit autocorrelation, as in our motivating example in the introduction.
In this case, the information set $I_T$ consists of previous forecasting errors, i.e., $I_T = \{e_{c,T | T-1}\}$, and the conditional bias follows a first-order autoregressive model $e_{c,T+1 | T} = \gamma e_{c,T | T-1} + \xi_{T+1}$ (for $h=1$), with $b_{T}=\gamma e_{c,T | T-1}$ and the remainder $\xi_{T+1}=e_{c,T+1 | T}-\gamma e_{c,T | T-1}$. 


We now approach the optimal weight problem by noting that, as \citet{granger1984improved} observed, the unconditional optimal weights can be estimated using ordinary least squares (OLS) in the linear regression
\begin{equation}
   y_{t+h} = w_1 f_{1,t+h | t} + \dots + w_n f_{n,t+h | t} +  e_{c,t+h | t},\; t = 1,\dots,T-h \label{eq:OLS}
\end{equation}
subject to the restriction $w_1 + \dots + w_n =1$. The implicit assumption is that $e_{c,t+h | t}$ are independent and identically distributed (\emph{iid}). If the errors are correlated, i.e., the \emph{iid} assumption does not hold, then the OLS estimator, $\widehat{\vw}^{\text{OLS}}$ (and its corresponding forecast $f_{T+h | T}^{\text{OLS}}$), remains unbiased and consistent but is not necessarily efficient. We can therefore estimate the corrected forecast combination using
\begin{equation*}
   y_{t+h} = w_1 f_{1,t+h | t} + \dots + w_n f_{n,t+h | t} +  \gamma \left[y_t - (w_1 f_{1,t | t-h} + \dots + w_n f_{n,t | t-h}) \right]+ \xi_{t+h},\; 
\end{equation*}
which is equivalent to 
\begin{equation*}
   y_{t+h} - \gamma y_t = w_1 (f_{1,t+h | t} - \gamma f_{1,t | t-h}) + \dots + w_n (f_{n,t+h | t} - \gamma f_{n,t | t-h}) + \xi_{t+h}. 
\end{equation*}
\citet{POOJARI2025100109} refer to this transformation as the Hildreth–Lu method and also present other alternatives that may be more attractive. We will focus on the simple version and
the corresponding estimators $(\widehat{\vw}^{\text{GLS}},\widehat{\gamma}^{\text{GLS}})$, which can be used to combine and correct at the same time
\begin{equation}
   f_{T+h | T}^{\text{GLS}} = (\widehat{\vw}^{\text{GLS}})'\vf_{T+h | T} + \widehat{\gamma}^{\text{GLS}} (y_T - (\widehat{\vw}^{\text{GLS}})'\vf_{T | T-h} ). \label{eq:GLScorrected}
\end{equation}

When considering the forecasting errors, $e^{\text{OLS}}_{T+h | T} = y_T - f_{T+h | T}^{\text{OLS}}$ and $e_{T+h | T}^{\text{GLS}} = y_T - f_{T+h | T}^{\text{GLS}}$, one naturally expects the GLS to perform better. This result is formalized by the following theorem.

\begin{theorem}
   The conditionally optimal forecast achieves lower conditional and unconditional MSE, i.e.,
     $$\E[(e^{\text{OLS}}_{T+h | T})^2 | I_T] \ge \E[(e_{T+h | T}^{\text{GLS}})^2 | I_T]  \text{ and }
   \E[(e^{\text{OLS}}_{T+h | T})^2] \ge \E[(e_{T+h | T}^{\text{GLS}})^2]. $$
\end{theorem}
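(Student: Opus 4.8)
The plan is to prove the statement at the oracle (population) level, exactly as in the proofs of Theorems~1 and~2, replacing $\widehat{\vw}^{\text{OLS}}$, $\widehat{\vw}^{\text{GLS}}$, $\widehat{\gamma}^{\text{GLS}}$ by the population quantities they estimate; the passage to the feasible estimators is then a standard consistency argument that I would relegate to a remark. First I would collect the GLS-type forecasts into the class
\[
   \mathcal{F} = \bigl\{ \vw'\vf_{T+h | T} + \gamma\bigl(y_T - \vw'\vf_{T | T-h}\bigr) : \vw'\viota = 1,\ \gamma \in \mathbb{R} \bigr\},
\]
so that (\ref{eq:GLScorrected}) selects the member of $\mathcal{F}$ that is conditionally optimal given $I_T$, i.e.\ $f^{\text{GLS}}_{T+h | T} = \arg\min_{f \in \mathcal{F}} \E[(y_{T+h} - f)^2 \mid I_T]$, this being precisely the conditional combination problem~(\ref{eq:opt_problem}) under $L(e) = e^2$.

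The conditional inequality is then essentially immediate: the OLS forecast is nested in $\mathcal{F}$, since taking $\gamma = 0$ and $\vw = \vw^{\text{OLS}}$ returns $f^{\text{OLS}}_{T+h | T} = (\vw^{\text{OLS}})'\vf_{T+h | T}$. Hence $\E[(e^{\text{OLS}}_{T+h | T})^2 \mid I_T] \ge \E[(e^{\text{GLS}}_{T+h | T})^2 \mid I_T]$ by optimality of $f^{\text{GLS}}_{T+h | T}$ over $\mathcal{F}$, and the unconditional inequality follows by taking expectations, as in the earlier proofs.

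The step I would actually spell out, because it connects the result back to the correction mechanism of Sections~\ref{sec:mean} and~\ref{sec:optimalF}, routes the argument through a ``corrected OLS'' forecast. Under the AR(1) conditional-bias model adopted in Section~\ref{sec:GLS}, the conditional bias of the OLS combined error is $b_T^{\text{OLS}} = \E(e^{\text{OLS}}_{T+h | T} \mid I_T) = \gamma^{\text{OLS}}\bigl(y_T - (\vw^{\text{OLS}})'\vf_{T | T-h}\bigr)$, so that $f^{\text{OLS}}_{T+h | T} + b_T^{\text{OLS}} \in \mathcal{F}$. Applying the argument of Theorem~1 verbatim to the OLS combination gives $\E[(e^{\text{OLS}}_{T+h | T})^2 \mid I_T] \ge \E[(e^{\text{OLS}}_{T+h | T} - b_T^{\text{OLS}})^2 \mid I_T]$, and since the right-hand side is the conditional MSE of a forecast in $\mathcal{F}$ it is in turn bounded below by $\E[(e^{\text{GLS}}_{T+h | T})^2 \mid I_T]$; chaining the two inequalities and taking unconditional expectations finishes the proof.

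The main obstacle is not a calculation but the gap between this oracle statement and the feasible estimators. Identifying the population analogue of $(\widehat{\vw}^{\text{GLS}}, \widehat{\gamma}^{\text{GLS}})$ with the conditional-MSE minimizer over $\mathcal{F}$ requires correct specification of the error-covariance (AR(1)) structure, and the representability of $b_T^{\text{OLS}}$ inside the one-parameter Hildreth--Lu class used in~(\ref{eq:GLScorrected}) is exactly what binds us to that structure---the general conditional bias is treated only in Section~\ref{sec:generalization}. I would therefore either state Theorem~3 under the maintained AR(1) conditional-bias assumption or attach the oracle caveat explicitly, noting that for the feasible two-stage GLS/Hildreth--Lu estimator the inequalities hold up to estimation error once consistency is invoked.
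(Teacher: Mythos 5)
Your proposal takes a genuinely different route from the paper's: the paper supplies no argument of its own for this theorem and simply defers to Theorem~1 of \citet{COW2024}, whereas you build a self-contained nesting argument over the Hildreth--Lu class $\mathcal{F}$. At the oracle level your unconditional inequality is sound and is essentially the ``larger feasible set'' argument: the population GLS pair minimizes $\E\bigl[(y_{T+h}-\vw'\vf_{T+h|T}-\gamma(y_T-\vw'\vf_{T|T-h}))^2\bigr]$ over $\mathcal{F}$, and the OLS forecast is the $\gamma=0$ member, so the unconditional MSE can only fall. What this buys is a transparent proof where the paper offers only a citation; what it costs is the oracle-versus-feasible caveat, which you state honestly and which the paper's citation to \citet{COW2024} also implicitly carries.

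The step that needs more care is the conditional inequality. The population GLS estimand is an \emph{unconditional} projection onto $\mathcal{F}$; it is the pointwise-in-$I_T$ conditional-MSE minimizer over $\mathcal{F}$ only if the AR(1) conditional-bias model is correctly specified, so that the conditional mean of $y_{T+h}$ given $I_T$ and the forecasts actually lies in $\mathcal{F}$. Your chaining step---``the corrected-OLS conditional MSE is bounded below by the GLS conditional MSE because the corrected-OLS forecast is a member of $\mathcal{F}$''---silently invokes this pointwise optimality. Without correct specification the GLS member of $\mathcal{F}$ can have \emph{larger} conditional MSE than the corrected-OLS forecast on a positive-probability set of realizations of $I_T$, and only the unconditional comparison survives. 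So the correct-specification condition you relegate to a closing remark is in fact load-bearing for the first displayed inequality of the theorem and should be promoted to a hypothesis; with it made explicit your argument goes through, and it matches the conditional-framework assumptions under which Theorem~1 of \citet{COW2024} is proved.
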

\begin{proof}
   See Theorem 1 of \citet{COW2024}. \qed
\end{proof}

\subsection{Generalization} \label{sec:generalization}


We now extend our approach to allow general time-series dependence in the forecasting error term.
Let $\vy = (y_{1+h},\dots,y_{T})'$, $\mathbf{F} = (\vf_{1+h|1},\dots,\vf_{T|T-h})'$, $\vf_{t+h|t} = (f_{1,t+h|t},\dots,f_{n,t+h|t})'$, and $\ve_{c} = (e_{c,1+h|1},\dots,e_{c,T|T-h})'$. In matrix notation, we write the model (\ref{eq:OLS}) as $\vy = \mathbf{F}\vw + \ve_{c}$. The time-series dependence in the forecasting errors can be modeled with a general covariance matrix $\E[\ve_{c}\ve_{c}']= \Omega(\vgamma)$. Suppose that $\ve_{c}$ follows a stationary ARMA$(p,q)$ process and that $\Omega(\vgamma)$ is known. Then we solve the following optimization problem 
\begin{equation}
   \min_{\vw} (\vy - \mathbf{F} \vw)' \Omega^{-1}(\vgamma) (\vy - \mathbf{F} \vw) \label{eq:opt_problem_GLS}
\end{equation}
subject to the restriction $\vw'\viota=1$, which produces the optimal weights
\begin{equation}
\vw^{\text{opt}} = \frac{\widetilde{\Sigma}^{-1}\viota}{\viota'\widetilde{\Sigma}^{-1}\viota}, \label{eq:opt_weights}
\end{equation}
where $\widetilde{\Sigma}=\widetilde{\ve}' \,\widetilde{\ve} $, $\widetilde{\ve}  = \Omega^{-1/2}(\vgamma) \ve$, and $\ve  = (\vy\viota'-\mathbf{F})$.

In practice, $\Omega(\vgamma)$ is unknown, but the efficient generalized least squares $(\widehat{\vw}^{\text{GLS}},\widehat{\vgamma}^{\text{GLS}})$ can be estimated simultaneously using the maximum likelihood estimation (MLE) method proposed by \citet{harvey1979maximum} or a two-step procedure proposed by \citet{zinde1991estimation}. Let $R(\vw)=(\vy - \mathbf{F}\vw)'\Omega^{-1}(\vgamma) (\vy - \mathbf{F}\vw)$ and $R_{n}(\vw)=(\vy - \mathbf{F}\vw)'\Omega^{-1}(\widehat{\vgamma})(\vy - \mathbf{F}\vw)$ denote the theoretical and empirical risks, respectively. They differ only in the use of the true parameter $\gamma$ versus its estimator $\widehat{\vgamma}$. The following theorem shows that the theoretical minimum risk $R(\vw^\text{opt})$ and the empirical GLS risk $R_{n}(\widehat{\vw}^\text{GLS})$ are approximately equal as long as the covariance matrix estimate $\Omega^{-1}(\widehat{\vgamma})$ is close to the true value $\Omega(\vgamma)$.

\begin{theorem}
   Let $a_{T}=\|\ve'(\Omega^{-1}(\widehat{\vgamma})-\Omega^{-1}(\vgamma))\ve\|_{\infty}$. Suppose that $\|\vw\|_1\leq{c}$, then, we have
   \begin{align*}
   |R(\vw^\text{opt})-R_{n}(\widehat{\vw}^\text{GLS})|&\leq a_{T}c^2,\\
   |R(\widehat{\vw}^\text{GLS})-R_{n}(\widehat{\vw}^\text{GLS})|&\leq a_{T}c^2,\\
   |R(\widehat{\vw}^\text{GLS})-R(\vw^\text{opt})|&\leq 2a_{T}c^2.
   \end{align*}    
\end{theorem}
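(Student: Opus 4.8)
The plan is to reduce all three bounds to a single elementary estimate controlling the gap between the theoretical risk $R$ and the empirical risk $R_n$ at one and the same weight vector, and then to play this off against the defining optimality properties of $\vw^{\text{opt}}$ (the minimizer of $R$) and $\widehat{\vw}^{\text{GLS}}$ (the minimizer of $R_n$) over the restricted set $\{\vw:\vw'\viota=1\}$.

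First I would use the constraint $\vw'\viota=1$ to write the residual as $\vy-\mathbf{F}\vw=\ve\vw$ with $\ve=\vy\viota'-\mathbf{F}$, so that for any such $\vw$
\[
R(\vw)-R_n(\vw)=(\ve\vw)'\bigl(\Omega^{-1}(\vgamma)-\Omega^{-1}(\widehat{\vgamma})\bigr)(\ve\vw)=-\,\vw'D\vw,\qquad D:=\ve'\bigl(\Omega^{-1}(\widehat{\vgamma})-\Omega^{-1}(\vgamma)\bigr)\ve .
\]
The key step is then the H\"older-type bound $|\vw'D\vw|=\bigl|\sum_{i,j}w_iD_{ij}w_j\bigr|\le\bigl(\max_{i,j}|D_{ij}|\bigr)\bigl(\sum_i|w_i|\bigr)^2=\|D\|_\infty\|\vw\|_1^2\le a_Tc^2$, valid for every $\vw$ with $\|\vw\|_1\le c$ (and equally valid if $\|\cdot\|_\infty$ denotes the induced $\ell_\infty$ operator norm, since then $|\vw'D\vw|\le\|\vw\|_1\|D\vw\|_\infty\le\|\vw\|_1^2\|D\|_\infty$). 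Applying this at $\vw=\widehat{\vw}^{\text{GLS}}$ is exactly the second inequality, and applying it at $\vw=\vw^{\text{opt}}$ gives the auxiliary bound $|R(\vw^{\text{opt}})-R_n(\vw^{\text{opt}})|\le a_Tc^2$.

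Next I would bring in optimality: since $\vw^{\text{opt}}$ minimizes $R$ and $\widehat{\vw}^{\text{GLS}}$ minimizes $R_n$ over the admissible set, $R(\vw^{\text{opt}})\le R(\widehat{\vw}^{\text{GLS}})$ and $R_n(\widehat{\vw}^{\text{GLS}})\le R_n(\vw^{\text{opt}})$. Chaining these with the two risk-gap bounds just obtained,
\[
R(\vw^{\text{opt}})\le R(\widehat{\vw}^{\text{GLS}})\le R_n(\widehat{\vw}^{\text{GLS}})+a_Tc^2,\qquad R_n(\widehat{\vw}^{\text{GLS}})\le R_n(\vw^{\text{opt}})\le R(\vw^{\text{opt}})+a_Tc^2,
\]
so $R_n(\widehat{\vw}^{\text{GLS}})$ is pinned within $a_Tc^2$ of $R(\vw^{\text{opt}})$ from both sides, which is the first inequality. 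The third then follows from the triangle inequality, $|R(\widehat{\vw}^{\text{GLS}})-R(\vw^{\text{opt}})|\le|R(\widehat{\vw}^{\text{GLS}})-R_n(\widehat{\vw}^{\text{GLS}})|+|R_n(\widehat{\vw}^{\text{GLS}})-R(\vw^{\text{opt}})|\le 2a_Tc^2$.

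The argument is deterministic and essentially mechanical once the identity $R(\vw)-R_n(\vw)=-\vw'D\vw$ is set up; $a_T$ merely records how far $\Omega^{-1}(\widehat{\vgamma})$ is from $\Omega^{-1}(\vgamma)$ after sandwiching between $\ve'$ and $\ve$. The main thing to keep straight is the bookkeeping: I need $\vw^{\text{opt}}$ and $\widehat{\vw}^{\text{GLS}}$ to satisfy both $\vw'\viota=1$ (so the residual identity holds) and $\|\vw\|_1\le c$ (so the $\ell_1/\ell_\infty$ estimate applies), the latter being precisely the hypothesis of the theorem; no probabilistic or distributional input is needed beyond that.
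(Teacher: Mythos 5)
Your proposal is correct and takes essentially the same approach as the paper: the paper establishes the identical residual identity $\vy-\mathbf{F}\vw=\ve\vw$ and quadratic-form representation of $R$ and $R_{n}$, and then defers the rest to Theorem~1 of \citet{fan2012vast}, which is exactly the $\ell_1$/$\ell_\infty$ bound on $\vw'D\vw$ plus the two optimality chains and the triangle inequality that you write out. You have simply supplied in full the details that the paper outsources to that citation.
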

\begin{proof}
Under the restriction $\vw'\viota=1$, we have $\ve_{c} = \vy - \mathbf{F}\vw = \vy\viota'\vw - \mathbf{F}\vw = (\vy\viota'-\mathbf{F})\vw = \ve\vw$. Therefore, we can rewrite the theoretical and empirical risks as $R(\vw)=\vw'\ve'\Omega^{-1}(\vgamma)\ve\vw$ and $R_{n}(\vw)=\vw'\ve'\Omega^{-1}(\widehat{\vgamma})\ve\vw$, respectively. Following a similar argument to the proof of Theorem~1 of \citet{fan2012vast}, we can show the results. \qed
\end{proof}

While our theory allows for general correction, parsimonious models often deliver good forecasts in practice. Therefore, the practical advice is to use simple corrections, as in our motivating example in the introduction, before adding complexity and estimation variability that could harm forecasting performance.






\section{Empirical illustration: The US Survey of Professional Forecasters} \label{sec:empirical}



The analysis uses individual-level forecasts from the U.S. Survey of Professional Forecasters (SPF), conducted quarterly by the Federal Reserve Bank of Philadelphia. The SPF collects forecasts from a panel of professional economists for a range of key U.S. macroeconomic variables across multiple horizons. We focus on one-step-ahead forecasts ($h=1$) and four major macroeconomic indicators: the civilian unemployment rate (UNEMP), real gross domestic product (RGDP), industrial production growth (INDPROD), and consumer price inflation (CPI). The data were accessed in October 2025\footnote{\url{https://www.philadelphiafed.org/surveys-and-data/data-files}}. 
Figure~\ref{fig:macro} presents a graphical overview of the data.

\begin{figure}[htbp]
    \centering

    \begin{subfigure}{0.45\textwidth}
        \centering
        \includegraphics[width=\linewidth]{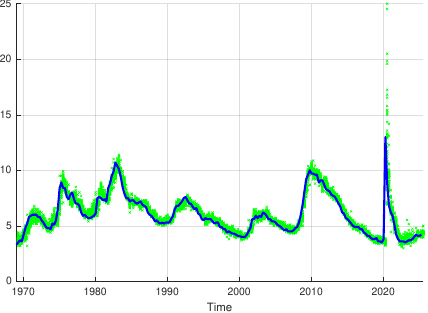}
        \caption{UNEMP}
        \label{fig:UNEMP}
    \end{subfigure}
    \hfill
    \begin{subfigure}{0.45\textwidth}
        \centering
        \includegraphics[width=\linewidth]{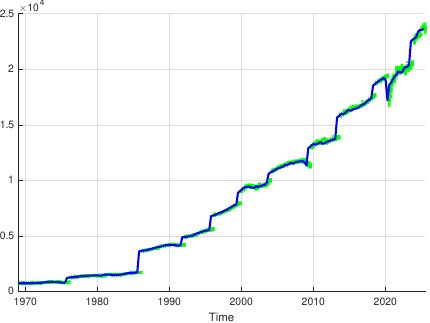}
        \caption{RGDP}
        \label{fig:RGDP}
    \end{subfigure}

    \vspace{0.5cm}

    \begin{subfigure}{0.45\textwidth}
        \centering
        \includegraphics[width=\linewidth]{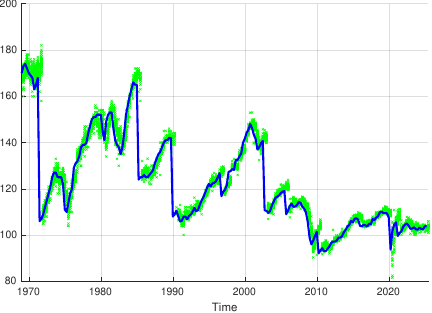}
        \caption{INDPROD}
        \label{fig:INDPROD}
    \end{subfigure}
    \hfill
    \begin{subfigure}{0.45\textwidth}
        \centering
        \includegraphics[width=\linewidth]{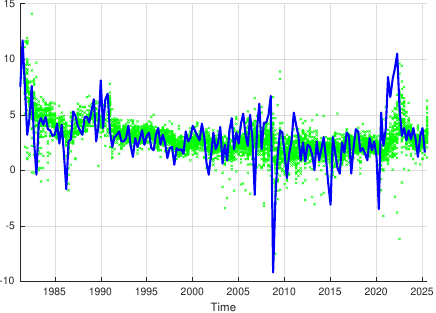}
        \caption{CPI}
        \label{fig:CPI}
    \end{subfigure}

    \caption{Four major macroeconomic indicators considered in the empirical study. The line represents the actual realizations of each macroeconomic variable, while the light crosses represent the individual expert forecasts from the US SPF.}
    \label{fig:macro}
\end{figure}

\subsection{Corrected Mean Forecast}

We first investigate the performance of the corrected mean forecast in our data. Because we have several indicators, we introduce a subscript to distinguish among them, making the correction specific to each indicator
\begin{equation}
   f^{\text{CEW}}_{i,T+1|T} = f^{\text{EW}}_{i,T+1|T} + \gamma_i\, e^{\text{EW}}_{i,T|T-1}, \label{eq:CEW}
\end{equation}
where $i\in$\{UNEMP, RGDP, INDPROD, CPI\}. The forecasts and correction elements remain the same as before, but with the additional subscript $i$.
To assess the effects of events such as the global financial crisis (GFC) and the COVID pandemic, as well as performance in historical and more recent periods, we consider the periods specified in Table~\ref{tab:periods}.
\begin{table}[ht]
\centering
\begin{tabular}{ll}
\hline
Period                 & Description            \\
\hline
1969Q1--2025Q2         & complete (CPI data starts in 1981Q3)              \\
1969Q1--2025Q2$^*$     & complete (excl. COVID, CPI data starts in 1981Q3)   \\
1969Q1--1999Q4         & before 2000 (CPI data starts in 1981Q3)            \\
2000Q1--2025Q2         & after 2000              \\
2000Q1--2025Q2$^*$     & after 2000 (excl. COVID)  \\
2000Q1--2008Q4         & before GFC              \\
2000Q1--2019Q4         & before COVID            \\
2010Q1--2019Q4         & between GFC and COVID        \\
2022Q1--2025Q2         & after COVID             \\
\hline
\end{tabular}
\caption{Period definitions and descriptions.} \label{tab:periods}
\end{table}

Table~\ref{tab:CorrMeanF} shows the relative root mean squared forecasting errors (RMSFE) across different periods and various correction factors, with the different panes corresponding to UNEMP, RGDP, INDPROD, and CPI.  The results are relative to the
original mean forecast for each row.
Let us consider the UNEMP results in the first pane. The COVID period is characterized by large outliers that are propagated by the correction (\ref{eq:CEW}), as shown in Figure~\ref{fig:UNEMP}. Therefore, when the period 2020-2022 is included in the evaluation, it is better to use the original mean forecast, i.e., all values in the rows corresponding to 1969Q1-2025Q2 and 2000Q1-2025Q2 are greater than one, and the optimal correction parameter is $\gamma_i=0$ (no correction).
However, once this relatively short period is excluded from the evaluation, the optimal correction is in the neighbourhood of 0.5 (used in the motivating example in the introduction) for 1969Q1-2025Q2$^*$, 2000Q1-2025Q2$^*$, and all other periods. 
%
This treatment of the COVID period aligns with the `set to missing' approach of \citet{Hyndman03042025} and is sufficient for our purposes; however, other approaches can also be used.

It appears that a correction factor of 0.5 is a reliable starting point for our application; however, with sufficient historical data, one can use historical data to determine the optimal correction factor via the simple optimization procedure (\ref{eq:optCFh}) introduced earlier, with the additional stabilizing restriction $-1<\widehat{\gamma}_{i,t}<1$. 
The final column in Table~\ref{tab:CorrMeanF} shows this possibility. As in the evaluation, the COVID period is detrimental to correction, so the values corresponding to COVID are excluded from the sum in (\ref{eq:optCFh}). The correction with the historically optimal correction factor is beneficial in all periods (once COVID is excluded).
Moreover, the estimated corrected factor $\widehat{\gamma}_{i,t}$ delivers the performance close to the optimal fixed correction factor. In fact, the estimated correction factor is very stable for most of the period, ranging between 0.4 and 0.54, as can be seen in Figure~\ref{fig:HistOptCorrFactor}.
\begin{figure}[htbp]
\centering
\includegraphics[width=0.5\linewidth]{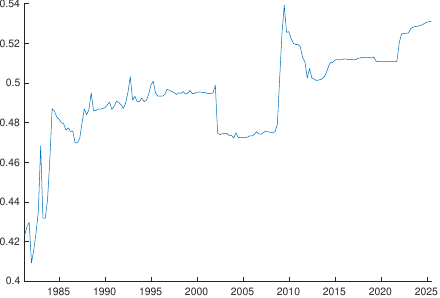}
\caption{Historically optimal correction factor for UNEMP computed using the past available data and used for the mean forecast correction.}
\label{fig:HistOptCorrFactor}
\end{figure}

The results for RGDP are more pronounced. Even with the COVID period included, we observe around a 10\% improvement from the correction in the 1969Q1--2025Q2 period, and this improvement increases once COVID is excluded from the evaluation in 1969Q1--2025Q2$^*$. In all periods considered, the correction is beneficial, i.e., the minimum relative RMSFE is substantially less than 1 in all rows when a fixed correction factor is used. 
Again, a correction factor of around 0.5 delivers good performance in all cases. When using the historically optimal correction factor, as given in the last column, performance drops only marginally (with the largest deterioration in the distant 1969Q1--1999Q4 period), and the corrected mean forecast always outperforms the mean.
The results for INDPROD further strengthen support for the fixed correction factor of 0.5, with only a marginal drop in performance when using the historically optimal correction factor across all periods.

CPI is challenging to forecast, see \citet{COW2024}, and beating simple benchmarks, such as a simple average or random walk variations, by more than a few percent is notoriously difficult.
Keeping this in mind, we can examine the CPI results that diverge from other indicators. The correction yields only minor improvements over the mean forecasts, and the best results are achieved with a correction factor close to 0.3. The important exception is the most recent period after COVID, 2022Q1--2025Q2, where the fixed factor 0.5 yields a 30\% improvement. Again, estimating the historically optimal correction factor affects performance only marginally, except for the recent 2022Q1--2025Q2 period, where the drop is more substantial, yet the final results are 16\% better than the original mean forecast.

\begin{table}[ht]
\centering \small
\begin{tabular}{lcccccccccccc}
\hline
 & \multicolumn{10}{c}{Fixed correction factor} & valid. \\
 \cline{2-11}
Period  & 0.1 & 0.2 & 0.3 & 0.4 & 0.5 & 0.6 & 0.7 & 0.8 & 0.9 & 1 & hist. \\
\hline
\multicolumn{1}{c}{UNEMP}\\
1969Q1--2025Q2  & 1.01 & 1.04 & 1.07 & 1.11 & 1.16 & 1.21 & 1.27 & 1.33 & 1.40 & 1.47 & 1.16 \\
1969Q1--2025Q2$^*$  & 0.95 & 0.91 & 0.88 & \textbf{0.86} & 0.86 & 0.86 & 0.88 & 0.90 & 0.94 & 0.99 & 0.86 \\
1969Q1--1999Q4  & 0.95 & 0.92 & 0.89 & 0.87 & \textbf{0.87} & 0.87 & 0.89 & 0.92 & 0.96 & 1.00 & 0.87 \\
2000Q1--2025Q2  & 1.02 & 1.05 & 1.09 & 1.14 & 1.19 & 1.25 & 1.32 & 1.39 & 1.46 & 1.53 & 1.20 \\
2000Q1--2025Q2$^*$  & 0.95 & 0.90 & 0.86 & 0.84 & \textbf{0.83} & 0.83 & 0.84 & 0.87 & 0.90 & 0.95  & 0.83 \\
2000Q1--2008Q4  & 0.97 & 0.94 & 0.92 & 0.90 & 0.89 & \textbf{0.89} & 0.89 & 0.91 & 0.92 & 0.95  & 0.90 \\
2000Q1--2019Q4  & 0.95 & 0.90 & 0.87 & 0.85 & 0.83 & \textbf{0.83} & 0.84 & 0.87 & 0.90 & 0.94  & 0.84 \\
2010Q1--2019Q4  & 0.97 & 0.95 & 0.94 & \textbf{0.93} & 0.94 & 0.96 & 0.99 & 1.03 & 1.08 & 1.13  & 0.95 \\
2022Q1--2025Q2  & 0.91 & 0.84 & 0.79 & 0.75 & \textbf{0.74} & 0.75 & 0.79 & 0.85 & 0.92 & 1.00 &  0.74 \\
\hline
\multicolumn{1}{c}{RGDP}\\
1969Q1--2025Q2     & 0.96 & 0.93 & 0.91 & \textbf{0.90} & 0.90 & 0.92 & 0.94 & 0.97 & 1.01 & 1.06 & 0.94 \\
1969Q1--2025Q2$^*$ & 0.95 & 0.91 & 0.88 & 0.86 & \textbf{0.85} & 0.86 & 0.87 & 0.90 & 0.93 & 0.98 & 0.90 \\
1969Q1--1999Q4     & 0.95 & 0.91 & 0.88 & 0.86 & 0.85 & \textbf{0.85} & 0.86 & 0.88 & 0.91 & 0.95 & 0.95 \\
2000Q1--2025Q2     & 0.96 & 0.94 & 0.92 & \textbf{0.92} & 0.92 & 0.94 & 0.97 & 1.01 & 1.05 & 1.11 & 0.94 \\
2000Q1--2025Q2$^*$ & 0.95 & 0.91 & 0.88 & 0.86 & \textbf{0.86} & 0.86 & 0.88 & 0.91 & 0.95 & 0.99 & 0.87 \\
2000Q1--2008Q4     & 0.95 & 0.91 & \textbf{0.89} & 0.89 & 0.91 & 0.95 & 1.01 & 1.08 & 1.16 & 1.25 & 0.94 \\
2000Q1--2019Q4     & 0.95 & 0.92 & 0.89 & 0.87 & \textbf{0.87} & 0.88 & 0.90 & 0.93 & 0.97 & 1.02 & 0.89 \\
2010Q1--2019Q4     & 0.95 & 0.91 & 0.88 & 0.86 & \textbf{0.85} & 0.85 & 0.87 & 0.89 & 0.93 & 0.97 & 0.87 \\
2022Q1--2025Q2     & 0.95 & 0.90 & 0.87 & 0.85 & 0.83 & \textbf{0.83} & 0.84 & 0.87 & 0.90 & 0.94 & 0.85 \\
\hline
\multicolumn{1}{c}{INDPROD}\\
1969Q1--2025Q2     & 0.95 & 0.91 & 0.88 & 0.86 & \textbf{0.85} & 0.86 & 0.87 & 0.90 & 0.93 & 0.98 & 0.86 \\
1969Q1--2025Q2$^*$ & 0.95 & 0.91 & 0.88 & 0.86 & \textbf{0.85} & 0.85 & 0.86 & 0.89 & 0.92 & 0.97 & 0.85 \\
1969Q1--1999Q4     & 0.95 & 0.91 & 0.88 & 0.86 & \textbf{0.85} & 0.85 & 0.87 & 0.89 & 0.93 & 0.97 & 0.86 \\
2000Q1--2025Q2     & 0.95 & 0.92 & 0.89 & 0.88 & \textbf{0.87} & 0.88 & 0.90 & 0.92 & 0.96 & 1.01 & 0.88 \\
2000Q1--2025Q2$^*$ & 0.95 & 0.91 & 0.87 & 0.85 & \textbf{0.84} & 0.84 & 0.85 & 0.87 & 0.91 & 0.95 & 0.84 \\
2000Q1--2008Q4     & 0.95 & 0.91 & 0.87 & 0.85 & 0.84 & \textbf{0.83} & 0.85 & 0.87 & 0.90 & 0.94 & 0.84 \\
2000Q1--2019Q4     & 0.95 & 0.90 & 0.87 & 0.85 & 0.84 & \textbf{0.84} & 0.85 & 0.87 & 0.90 & 0.95 & 0.84 \\
2010Q1--2019Q4     & 0.96 & 0.93 & 0.90 & 0.89 & \textbf{0.89} & 0.90 & 0.92 & 0.96 & 1.00 & 1.05 & 0.89 \\
2022Q1--2025Q2     & 0.97 & 0.96 & \textbf{0.95} & 0.96 & 0.97 & 1.00 & 1.03 & 1.08 & 1.13 & 1.18 & 0.98 \\
\hline
\multicolumn{1}{c}{CPI}\\
1981Q3--2025Q2     & 0.98 & 0.96 & \textbf{0.96} & 0.97 & 0.99 & 1.01 & 1.05 & 1.09 & 1.14 & 1.20 & 0.97 \\
1981Q3--2025Q2$^*$ & 0.98 & 0.97 & \textbf{0.97} & 0.98 & 1.00 & 1.03 & 1.07 & 1.12 & 1.17 & 1.23 & 0.98 \\
1981Q3--1999Q4     & 0.98 & 0.97 & \textbf{0.97} & 0.98 & 1.00 & 1.03 & 1.07 & 1.11 & 1.17 & 1.23 & 0.97 \\
2000Q1--2025Q2     & 0.98 & 0.96 & \textbf{0.96} & 0.97 & 0.98 & 1.01 & 1.04 & 1.09 & 1.14 & 1.20 & 0.97 \\
2000Q1--2025Q2$^*$ & 0.98 & 0.97 & \textbf{0.97} & 0.98 & 1.00 & 1.03 & 1.07 & 1.12 & 1.17 & 1.24 & 0.98 \\
2000Q1--2008Q4     & 1.01 & 1.02 & 1.04 & 1.06 & 1.08 & 1.11 & 1.15 & 1.18 & 1.22 & 1.26 & 1.03 \\
2000Q1--2019Q4     & \textbf{0.99} & 0.99 & 1.01 & 1.03 & 1.06 & 1.10 & 1.15 & 1.20 & 1.26 & 1.32 & 1.01 \\
2010Q1--2019Q4     & 0.98 & 0.97 & \textbf{0.96} & 0.97 & 0.99 & 1.02 & 1.06 & 1.11 & 1.16 & 1.22 & 0.97 \\
2022Q1--2025Q2     & 0.92 & 0.85 & 0.79 & 0.74 & 0.70 & \textbf{0.69} & 0.69 & 0.71 & 0.75 & 0.80 & 0.84 \\
\hline
\end{tabular}
\caption{Relative RMSFE for UNEMP, RGDP, INDPROD, and CPI across different periods (relative to the original mean forecast). The fixed correction factor is varied from 0 to 1 in increments of 0.1 (0 corresponds to the original mean forecast). The last column shows the result for the cross-validated historical correction factor (excluding the COVID period from validation). The minimum value (based on the full precision) that improves on the mean forecast in each row (across the fixed factor) is bold.\\ 
Note: $^*$ indicates that the COVID period (2020Q1-2022Q4) was excluded from evaluation.} \label{tab:CorrMeanF}
\end{table}

\subsection{Corrected Optimal Forecast}

We now turn to the correction of the optimal forecast. The main difficulty here is that the SPF is highly unbalanced, with many forecasters skipping periods, new forecasters joining the panel, and old ones dropping out, making the computation of the correlation matrix $\Sigma$ for the optimal weights~(\ref{eq:BGweights}) challenging. 
For this reason, we limit our demonstration to UNEMP for the period 2000Q1--2019Q4, before COVID. We further select 6 forecasters with at least 70 observations (out of 80). Even with this restriction, there are only a few time periods in which all of them submit their forecasts. We therefore need to implement a simple imputation using their previously submitted forecasts for the missing periods. 
Other forms of imputation that prevent data leakage (i.e., using future information to impute the past) are also possible, see \citet{ahn2022comparison}, but since a relatively small number of observations are missing, the final results should not be sensitive. The precision of the individual forecasts is reported in the top section of Table~\ref{tbl:opt_weights_correction}.

\begin{table}[h]
\centering

\begin{tabular}{lcc}
\hline
{Forecasts} & {MSFE} & {relative MSFE} \\
\hline       
Individual forecasts \\
ID 421        & 0.1557 & 0.9956 \\
ID 426        & 0.1736 & 1.1104 \\
ID 484        & 0.3679 & 2.3534 \\
ID 504        & 0.1681 & 1.0750 \\
ID 510        & 0.1948 & 1.2460 \\
ID 518        & 0.2255 & 1.4422 \\                  
\hline
Combinations \\
Mean Forecast       & 0.1563 & 1.0000 \\
-- corrected with $\gamma=0.5$ & 0.0802  &  0.5132 \\
-- corrected with $\gamma=0.65$ (fixed opt.) & 0.0729 & 0.4663 \\
-- corrected with hist. opt. factor &   0.0878   & 0.5613\\
Optimal Combination (via restr. OLS)        & 0.1613 & 1.0315 \\
-- corrected with $\gamma= 0.5$ & 0.0915 & 0.5850 \\
-- corrected with $\gamma=0.7$ (fixed opt.) & 0.0860 & 0.5502 \\
-- corrected with hist. opt. factor & 0.0914 & 0.5846 \\
One-step combinination and correction                 & 0.0825 & 0.5275 \\
(via restricted GLS)  \\
\hline
\end{tabular}
\caption{Mean Squared Forecasting Errors (MSFE) and relative (to the mean forecast) MSFE for individual forecasts and combinations. The evaluation is done over the period starting from 2006, when all methods deliver a forecast, as the out-of-sample optimal combination and GLS require an initial sample to start the estimation procedure.} \label{tbl:opt_weights_correction}
\end{table}

The bottom part of Table~\ref{tbl:opt_weights_correction} reports the precision of the combination methods. The simple mean of the forecasts outperforms each individual forecast (except one), consistent with the vast empirical literature; see, for example, \citet{WANG20231518}. The errors of the mean forecast exhibit high autocorrelation, as shown in Figure~\ref{fig:autocorr_mean}. Applying a correction factor $\gamma_i=0.5$ yields close to a 50\% improvement, which increases by another 5\% when using the optimal fixed correction factor $\gamma_i = 0.65$. Estimating the historically optimal correction factor slightly deteriorates forecasting performance but still provides a sizable improvement of close to 45\% over the original mean forecast. As expected, the correction removes the autocorrelation in the forecasting errors, as shown in Figure~\ref{fig:autocorr_mean_corrected}.

%
%

When running the restricted regression (\ref{eq:OLS}) of \citet{granger1984improved} over the full period 2000Q1--2019Q4, the optimal (in-sample) weights improve upon equal weights by construction. However, a more realistic approach is to use only the available observations $[t_0,t]$ to estimate the optimal (out-of-sample) weights and then use them to forecast $t+1$. 
The performance of this approach corresponds to the optimal combination in Table~\ref{tbl:opt_weights_correction}. The optimal combination is slightly worse than the mean forecast, which is consistent with the vast empirical literature on the forecast combination puzzle, see \citet{claeskens2016forecast}.
The optimal forecast can be corrected using the default fixed factor 0.5, the optimal fixed factor 0.7, or a historically optimal factor $-1<\widehat{\gamma}_{i,t}<1$. All options outperform the mean forecast, which partially addresses the forecast combination puzzle, i.e., the corrected optimal forecast now performs better than the mean forecast, as expected by theory but often violated in practice. 
The improvements are dramatic, with the default $\gamma_i=0.5$ correction performing more than 40\% better than the mean. The optimal fixed correction factor $\gamma_i=0.7$ further improves performance by another 3.5\%. Estimating the historical correction factor diminishes the improvement, yielding results similar to the default correction with $\gamma_i=0.5$.

Finally, the last row of Table~\ref{tbl:opt_weights_correction} reports the case in which the estimation of the optimal weights and the correction factor is done simultaneously via GLS, as given by equation~(\ref{eq:GLScorrected}) in Section~\ref{sec:GLS}. This one-step procedure yields better results than the two-step forecasts analyzed in the previous paragraph because it is more efficient. 
Its performance is almost catching up to the corrected mean forecast using the default correction factor of 0.5. Our practical recommendation is to use the corrected mean forecast until further research reveals insights into the modified forecasting puzzle: the corrected optimal forecast is better than the mean forecast, yet it is still outperformed by the corrected mean forecast. However, the GLS forecast is now a formidable contender and might outperform the corrected mean forecast in some applications.



\begin{figure}[htbp]
    \centering
    
     \begin{subfigure}{0.45\textwidth}
        \centering
        \includegraphics[width=\linewidth]{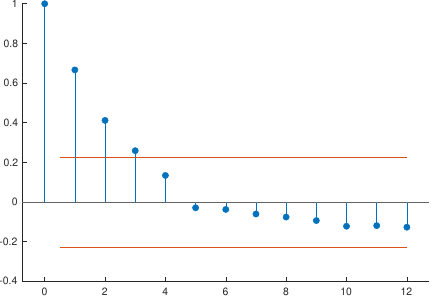}
        \caption{Mean forecast errors}
        \label{fig:autocorr_mean}
    \end{subfigure}
    \hfill
    \begin{subfigure}{0.45\textwidth}
        \centering
        \includegraphics[width=\linewidth]{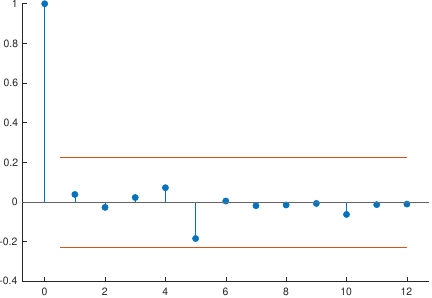}
        \caption{Corrected mean forecast errors (correction factor 0.65)}
        \label{fig:autocorr_mean_corrected}
    \end{subfigure}
    
    \vspace{0.5cm}

    \begin{subfigure}{0.45\textwidth}
        \centering
        \includegraphics[width=\linewidth]{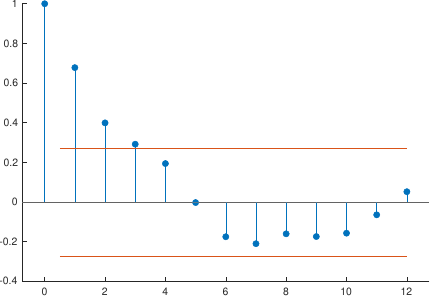}
        \caption{OLS regression forecast errors}
        \label{fig:autocorr_regr}
    \end{subfigure}
    \hfill
    \begin{subfigure}{0.45\textwidth}
        \centering
        \includegraphics[width=\linewidth]{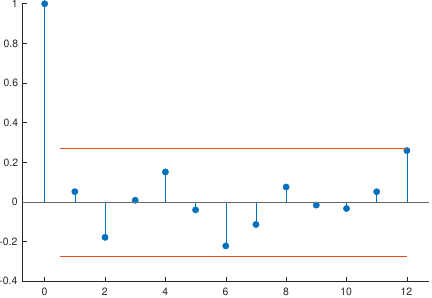}
        \caption{Corrected OLS regression forecast errors with historically optimal correction factor}
        \label{fig:autocorr_regr_corrected}
    \end{subfigure}
    
        \vspace{0.5cm}

    \begin{subfigure}{0.45\textwidth}
        \centering
        \includegraphics[width=\linewidth]{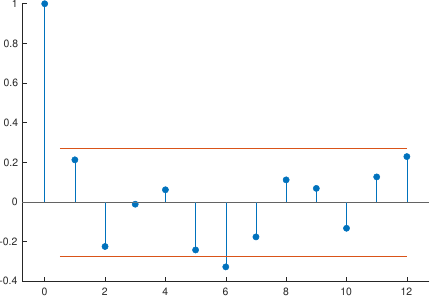}
        \caption{GLS regression forecast errors}
        \label{fig:autocorr_GLS}
    \end{subfigure}

    \caption{Autocorrelation of the forecast errors.}
    \label{fig:autocorr_regr_all}
\end{figure}

Figure~\ref{fig:autocorr_regr_all} investigates the autocorrelation of the regression forecasting errors. The OLS regression, which is equivalent to the optimal weights of \citet{bates1969combination}, produces forecasting errors that exhibit significant autocorrelation at multiple lags, as shown in Figure~\ref{fig:autocorr_regr}. Correcting the OLS forecast removes this autocorrelation, as shown in Figure~\ref{fig:autocorr_regr_corrected}. The same result is obtained by estimating the weight and correction simultaneously via GLS, as shown in Figure~\ref{fig:autocorr_GLS}.

%


\section{Conclusion} \label{sec:conclusion}

This paper shows that an important and largely overlooked source of forecast improvement lies not in refining combination weights, but in correcting combined forecasts when their errors are serially dependent. Building on the classic \citet{bates1969combination} example, we demonstrate that combined forecast errors often exhibit substantial autocorrelation and that exploiting this predictability can deliver gains that exceed those achieved by forecast combination itself. We formalize this idea within the conditional-risk framework of \citet{COW2024} and link forecast correction to efficient estimation of combination weights under time-series dependence using GLS. Empirical evidence from the U.S. Survey of Professional Forecasters across multiple variables and subsamples strongly supports the effectiveness of this approach.

A central empirical finding is the robustness of a simple, easily implementable correction: adding a fraction of the previous combined forecast error to the next-period combined forecast. Across most variables and subsamples—excluding periods dominated by extreme structural disruptions such as COVID—a correction factor near 0.5 consistently yields sizable improvements in forecast accuracy. The gains from this parsimonious adjustment are comparable to those from estimating historically optimal correction factors, while avoiding additional estimation uncertainty. As a result, the corrected mean forecast is a practical and robust default choice for applied forecasting.

The correction perspective also offers new insight into the forecast combination puzzle. Although optimal-weight combinations estimated via OLS often perform poorly out of sample, correcting these forecasts substantially improves their accuracy and frequently restores their theoretical advantage over simple averages. Jointly estimating weights and the correction factor via GLS further improves efficiency and eliminates residual autocorrelation in forecast errors, although in our application, the corrected mean forecast remains highly competitive.

These findings point to a promising and largely unexplored direction for future research. We encourage researchers to give explicit attention to the time-series properties of combined forecast errors and to treat forecast correction as a complementary dimension to forecast combination. Rather than replacing existing techniques, correction can be naturally integrated into the multitude of established combination methods, as well as newer approaches, including machine-learning-based forecast ensembles. Incorporating simple correction mechanisms into both classical and modern forecasting frameworks has the potential to deliver substantial accuracy gains at minimal additional cost, and we view this as a fertile area for further theoretical and empirical work.

\section*{Code availability}

The code for the motivating example and the empirical investigation is available at \url{https://github.com/a-vasnev/GLS}.

\newpage
{
\bibliographystyle{chicago}
\bibliography{COW_bib}
}

\end{document}